\newcommand{\beq}{\begin{eqnarray}}
\newcommand{\eeq}{\end{eqnarray}}
\newcommand{\bmp}{\noindent\begin{minipage}{16cm}}
\newcommand{\emp}{\end{minipage}\vskip 7mm} 
\newcommand{\wt}{\widetilde}
\theoremstyle{definition}
\newtheorem*{Def}{Definition}
\theoremstyle{plain}
\newtheorem*{thm}{Theorem}
\definecolor{rossoCP3}{cmyk}{0,.88,.77,.40}
\def\lsim{\mathrel{\rlap{\lower4pt\hbox{\hskip1pt$\sim$}}
    \raise1pt\hbox{$<$}}}                
\def\gsim{\mathrel{\rlap{\lower4pt\hbox{\hskip1pt$\sim$}}
    \raise1pt\hbox{$>$}}}                
\newcommand{\drawsquare}[2]{\hbox{%
\rule{#2pt}{#1pt}\hskip-#2pt
\rule{#1pt}{#2pt}\hskip-#1pt
\rule[#1pt]{#1pt}{#2pt}}\rule[#1pt]{#2pt}{#2pt}\hskip-#2pt
\rule{#2pt}{#1pt}}
\newcommand{\Yfund}{\raisebox{-.5pt}{\drawsquare{6.5}{0.4}}}
\begin{document}
\title{\Large  \color{rossoCP3} ~~\\ Dual of QCD with One Adjoint Fermion}
\author{Matin {\sc Mojaza}$^{\color{rossoCP3}{\varheartsuit}}$}\email{mojaza@cp3.sdu.dk} 
\author{Marco {\sc Nardecchia}$^{\color{rossoCP3}{\varheartsuit}}$}\email{nardecchia@cp3.sdu.dk} 
\author{Claudio {\sc Pica}$^{\color{rossoCP3}{\varheartsuit}}$}\email{pica@cp3.sdu.dk} 
\author{Francesco {\sc Sannino}$^{\color{rossoCP3}{\varheartsuit}}$}\email{sannino@cp3.sdu.dk} 
\affiliation{
$^{\color{rossoCP3}{\varheartsuit}}${\color{rossoCP3} {\large C}entre for {\large P}article {\large  P}hysics  {\large  P}henomenology}, {\color{rossoCP3}\large CP}$^{ \bf \color{rossoCP3} 3}${ \color{rossoCP3}-Origins}, \\
University of Southern Denmark, Campusvej 55, DK-5230 Odense M, Denmark.
}

\begin{abstract}
We construct the magnetic dual of QCD with one adjoint Weyl fermion. The dual is a consistent solution of the 't Hooft anomaly  matching conditions, allows for flavor decoupling and remarkably constitutes the first nonsupersymmetric dual valid for {\it any} number of colors. The dual allows to bound the anomalous dimension of the Dirac fermion mass operator to be less than one in the conformal window.
\\[.1cm]
{\footnotesize  \it Preprint: CP$^3$-Origins-2011-01}
 \end{abstract}

\maketitle
\thispagestyle{fancy}

\section {Introduction}

One of the most fascinating possibilities is that generic asymptotically free gauge theories have magnetic duals. In fact, in the late nineties, in a series of  ground breaking papers Seiberg  \cite{Seiberg:1994bz,Seiberg:1994pq} provided strong support for the existence of a consistent picture of such a duality within a supersymmetric framework. Supersymmetry is, however, quite special and the existence of such a duality does not automatically imply the existence of nonsupersymmetric duals. One of the most relevant  results put forward by Seiberg  has been the identification of the boundary of the conformal window for supersymmetric QCD as function of the number of flavors and colors. 
The dual theories proposed by Seiberg pass a set of mathematical consistency relations known as 't Hooft anomaly matching conditions \cite{Hooft}.  Another important tool has been the knowledge of the all-orders supersymmetric beta function \cite{Novikov:1983uc,Shifman:1986zi,Jones:1983ip}. Recently we provided several analytic predictions for the conformal window of nonsupersymmetric gauge theories using different approaches \cite{Sannino:2004qp,Dietrich:2006cm,Ryttov:2007cx,Sannino:2009za,Pica:2010mt,Pica:2010xq,Ryttov:2010iz,Mojaza:2010cm, Sannino:2009aw, Ryttov:2009yw}.

We initiated in \cite{Sannino:2009qc}  the exploration of the possible existence of a QCD nonsupersymmetric gauge dual providing a consistent picture of the phase diagram as function of number of colors and flavors. Arguably the existence of a possible dual of a generic nonsupersymmetric asymptotically free gauge theory able to reproduce its infrared dynamics must match the 't Hooft anomaly conditions \cite{Hooft}. We have exhibited several solutions of these conditions for QCD in \cite{Sannino:2009qc}. An earlier exploration already appeared in \cite{Terning:1997xy}. 
In \cite{Sannino:2009me} we have also analyzed theories with fermions transforming according to higher dimensional representations. Some of these theories have been used to construct sensible extensions of the standard model of particle interactions of technicolor type passing precision data and known as Minimal Walking Technicolor  models \cite{Sannino:2004qp,Dietrich:2006cm}. Other interesting studies of technicolor dynamics making use of higher dimensional representations appeared in \cite{Christensen:2005cb}. These are the  known extension of technicolor type possessing the smallest intrinsic  $S$ parameter \cite{Peskin:1990zt,Peskin:1991sw,Kennedy:1990ib,Altarelli:1990zd} while being able to display simultaneously (near) conformal behavior before adding the backreaction due to traditional type extended technicolor interactions (ETC) \cite{Eichten:1979ah}. For recent analysis of the relevant properties of the $S$ parameter, using also  duality arguments we refer to \cite{Sannino:2010ca,Sannino:2010fh,DiChiara:2010xb}. As for the issue of the effects of the ETC interactions on the technicolor dynamics and associated conformal window, it has recently been argued in \cite{Fukano:2010yv} that before adding the ETC interactions the technicolor theory, in isolation,   should already be conformal for the combined model to be phenomenologically viable. A deeper understanding of the gauge dynamics of (near) conformal gauge theories is therefore needed  making the study of gauge duals very relevant. For example, the {\it magnetic} dual allows to predict, in principle, the critical number of flavors below which the electric theory looses large distance conformality.  It has also been show in  \cite{Sannino:2010fh} that it is possible to use  gauge dualities to compute  the $S$ parameter in the nonperturbative regime of the electric theory. 
 
Here we put the idea of nonsupersymmetric gauge duality on a much firmer ground by showing that for certain scalarless gauge theories with a spectrum similar to the one of QCD the gauge dual passes a large number of consistency checks.

The theory we choose is QCD with $N_f$ Dirac flavors and one adjoint Weyl fermion. A relevant feature of this theory is that it possesses the same global symmetry of super QCD despite the fact that squarks are absent. This means that there are four extra anomaly constraints not present in the case of ordinary QCD, moreover we will show that the dual can be constructed for any number of colors greater than two. 

The magnetic dual is a new gauge theory featuring magnetic quarks and a Weyl adjoint fermion, new gauge singlet fermions which can be identified as states composite of the electric variables, as well as scalar states needed to mediate the interactions between the magnetic quarks and the gauge singlet fermions. The new scalars allow for a consistent flavor decoupling which was an important consistency check in the case of supersymmetry. 

We will show that the dual allows to bound the anomalous dimension of the Dirac fermion mass operator to be less than one in the conformal window, and also estimate the critical number of flavors below which large distance conformality is lost in the electric variables. 

\section{The Electric Theory:  QCD with one Adjoint  Weyl Fermion} 
The electric theory is constituted by a scalarless  $SU(N)$ gauge theory with $N_f$ Dirac fermions and $N$ larger than two, as in QCD, but with an extra Weyl fermion transforming according to the adjoint representation of the gauge group.
The quantum global symmetry is:  
\begin{equation}
SU_L(N_f) \times SU_R(N_f) \times U_V(1)  \times U_{AF}(1)\ . 
\end{equation}
At the classical level there is one more $U_A(1)$ symmetry destroyed by quantum
corrections due to the Adler-Bell-Jackiw anomaly. Therefore of the three independent $U(1)$ symmetries only two survive, a vector like $U_V(1)$ and an axial-like anomaly free (AF) one indicated with $U_{AF}(1)$.  The spectrum of the theory and the global transformations are summarized in table \ref{Electric}.
\begin{table}[h]
\[ \begin{array}{|c| c | c c c c | } \hline
{\rm Fields} &  \left[ SU(N) \right] & SU_L(N_f) &SU_R(N_f) & U_V(1)&U_{AF}(1) \\ \hline \hline
\lambda &{\rm Adj} & 1 &1 &~~0& ~~1 \\
Q &\Yfund &{\Yfund }&1&~~1 & -\frac{N}{N_f} \\
\widetilde{Q} & \overline{\Yfund}&1 &  \overline{\Yfund}& -1 & -\frac{N}{N_f}   \\
G_{\mu}&{\rm Adj}   &1&1  &~~0  & ~~0\\
 \hline \end{array} 
\]
\caption{Field content of the electric theory and field tranformation properties. The squared brackets refer to the gauge group.}
\label{Electric}
\end{table}

The  global anomalies are: 

\begin{eqnarray}
SU_{L/R}(N_f)^3 = \pm N \ , \quad 
U_{V}(1) \, SU_{L/R}(N_f)^2  =  \pm \frac{N}{2} \ , \quad
U_{AF}(1) \, SU_{L/R}(N_f)^2  = - \frac{N^2}{2N_f}  \ ,\label{eq:An1}\\  
U_{AF}(1)^3   =    N^2 - 1 - 2\frac{N^4}{N_f^2} \ , \quad  
U_{AF}(1)\, U_{V}(1)^2 = - 2N^2 \ , \quad 
 [{\rm Gravity}]^2U_{AF}(1) =  -(N^2 + 1) \label{eq:An2} 
\end{eqnarray}
The first two anomalies are the same as in QCD and they are associated to the triangle diagrams featuring three $SU(N_f)$ generators (either all right or all left) at the vertices , or two 
$SU(N_f)$ generators (all right or all left) and one $U_V(1)$ charge.

\section{Solutions of the 't Hooft anomaly conditions for any number of colors} 

 We seek solutions of the anomaly matching conditions in the conformal window of the electric theory. This means that we consider a sufficiently large number of Dirac flavors so that the electric coupling constant freezes at large distances.

{}Following Seiberg we assume the dual to be a new $SU(X)$ gauge theory with global symmetry group $SU_L(N_f)\times SU_R(N_f) \times U_V(1) \times U_{AF}(1)$  featuring 
{\it magnetic} quarks ${q}$ and $\widetilde{q}$ transforming in the fundamental representation of $SU(X)$ together with a magnetic Weyl fermion $\lambda_{m}$ in the adjoint representation. We also introduce a minimal set of gauge singlet fermionic particles which can be viewed as composite states in terms of the electric variables but are considered elementary in the dual description. We limit to gauge singlet fermionic states which can be interpreted as made by three electric fields for any number of colors \footnote{Note that for $N=3$ ordinary baryons are also made by three states, however since we will find consistent solutions of the 't Hooft anomaly matching conditions for any $N$ we have not included them in the spectrum.}. The idea behind this choice is that composite states made by more fields could be constructed also in the dual theory from the new elementary fields. 

We summarize  the dual spectrum in table \ref{QCDAdual}. The dual gauge group can be different from the electric one since the only physical quantities are gauge singlets.  On the other hand the global symmetries must be the original 
ones\footnote{There might be exceptions such as possible enhanced global symmetries but it is not the case here.} given that can be physically probed.  

We wish to find solutions to the 't Hooft anomaly conditions valid for any number of colors. 
\begin{table}[b!]
\[ \begin{array}{|c|c|c|c c c c|c|} \hline
{\rm Fields} & \text{Composite eq.} &\left[ SU(X) \right] & SU_L(N_f) &SU_R(N_f) & U_V(1)& U_{AF}(1) & \# ~{\rm  of~copies} \\ \hline 
\hline 
\lambda_m & - & {\rm Adj} & 1 & 1 & 0 & z' & 1 \\
 q & - &\Yfund &\overline{\Yfund }&1&~~y & - \frac{X}{N_f} z &1 \\
\widetilde{q} & - & \overline{\Yfund}&1 &  {\Yfund}& -y& - \frac{X}{N_f} z  &1   \\
 M & Q \lambda \widetilde{Q} & 1 & \Yfund & \overline{\Yfund} & 0 & 1-2\frac{N}{N_f} & \ell_{M}  \\
 \widetilde{M} & \overline{Q \widetilde{Q}}\lambda & 1 &  \overline{\Yfund} &\Yfund & 0 & 1+2\frac{N}{N_f} &  \ell_{\widetilde{M}} \\
 \Lambda_S  & \overline{Q \lambda} Q \, ,    \overline{ \widetilde{Q} \lambda} \widetilde{Q} \, , \overline{\lambda \lambda } \lambda    & 1 & 1 & 1 & 0 & -1 &  \ell_{\Lambda_S} \\
  \Lambda_L & \overline{Q \lambda} Q & 1 & {\rm Adj}  & 1 & 0 & -1 & \ell_{\Lambda_L} \\
  \Lambda_R & \overline{\widetilde{Q} \lambda} \widetilde{Q} & 1 & 1 & {\rm Adj} & 0 & -1 & \ell_{\Lambda_R}  \\
  \Lambda_G    & \lambda GG& 1 & 1 & 1 & 0 & 1 & \ell_{\Lambda_G}\\
 \Lambda & \lambda \lambda \lambda & 1 & 1 & 1 & 0 & 3 & \ell_{\Lambda}\\
 \hline \end{array} 
\]
\caption{Massless spectrum of the candidate {\it magnetic}  fermions and their  transformation properties under the dual gauge and global symmetry group. The last column represents the multiplicity of each state and each state is a  Weyl fermion in the $(1/2,0)$ representation of the Lorentz group.}\label{QCDAdual}
\end{table}
$z^{\prime}$, $z$ and $y$ are arbitrary Abelian charges for the magnetic quarks and adjont fermion while the ones of the $SU(X)$ magnetic singlets are derived from the electric constituents. From table \ref{QCDAdual} we write below all the relevant anomalies for the dual theory which we require to match the electric ones given in eqs. \eqref{eq:An1} and  \eqref{eq:An2}  
\begin{align}
[SU(X)]^2 U_{AF}(1) = &X z'  + \frac{1}{2}\left(- \frac{X}{N_f} z\right) N_f \times 2 = X (z'-z) 
:= 0\\
SU_{L/R} (N_f)^3  = &\pm [  N_f ( \ell_{M} -  \ell_{\widetilde{M}} ) -X] 
:= \pm N\\
SU_{L/R} (N_f)^2 U_V(1) =& \pm \frac{1}{2} y X 
:= \pm \frac{N}{2} \\
SU_{L/R} (N_f)^2 U_{AF}(1) = &\frac{1}{2}  \left(- \frac{X}{N_f} z\right) X + \frac{1}{2} \ell_M N_f (1-2 \frac{N}{N_f}) +  \frac{1}{2} \ell_{\widetilde{M}} N_f (1+2 \frac{N}{N_f}) - \ell_{\Lambda_{L/R}} N_f   \nonumber \\
&:= - \frac{N^2}{2 N_f} 
\end{align}
 \begin{align}
U_V(1)^2 U_{AF}(1) = &2 \times y^2 \left(- \frac{X}{N_f} z\right) X N_f 
:= - 2 N^2\\
U_{AF}(1)^3 = &(X^2-1){z'}^{3} + X N_f\left(- \frac{X}{N_f} z\right)^3 \times 2 + \ell_M N_f^2 (1-2 \frac{N}{N_f})^3 + \ell_{\widetilde{M}} N_f^2 (1 + 2\frac{N}{N_f})^3 \nonumber \\
 &+ \ell_{\Lambda_L} (N_f^2-1) (-1)^3 +  \ell_{\Lambda_R} (N_f^2-1) (-1)^3 + \ell_{\Lambda_G} + (-1)^3 \ell_{\Lambda_S} + 3^3 \ell_\Lambda \nonumber \\
& := N^2 - 1 - 2 \frac{N^4}{N_f^2}\\
 [{\rm Gravity}]^2 U_{AF}(1) = &(X^2-1)z' + X N_f\left(- \frac{X}{N_f} z\right) \times 2 + \ell_M N_f^2 (1-2 \frac{N}{N_f}) + \ell_{\widetilde{M}} N_f^2 (1 + 2\frac{N}{N_f}) \nonumber \\
 &- \ell_{\Lambda_L} (N_f^2-1) -  \ell_{\Lambda_R} (N_f^2-1)+ \ell_{\Lambda_G} - \ell_{\Lambda_S} + 3 \ell_\Lambda 
 := -N^2 - 1 \ .
\end{align}

The electric $U_{AF}(1)$, anomaly free condition, i.e.  $[SU(X)]^2 U_{AF}(1) = 0$ provides the first nontrivial constraint, i.e. $z'=z$.

Consider now the 't Hooft anomaly matching conditions from $SU_{L}(N_f)^2 U_V(1)$ and the $U_V(1)^2 U_{AF}(1)$:
\begin{equation}
\frac{Xy}{2}=\frac{N}{2} \ , \qquad  
X^2y^2z=N^2 \ .
\end{equation}
which can be simultaneously solved for 
\begin{equation}
z=1 \qquad {\rm and} \qquad  y=N/X \ .
\end{equation} 
The difference between the $SU_{L}(N_f)^2 U_{AF}(1)$ and $SU_{R}(N_f)^2 U_{AF}(1)$ anomalies forces \mbox{$\ell_{\Lambda_R}=\ell_{\Lambda_L}$}.

At this point we remain with 7 unknowns $\left(X,\ell_{\Lambda_M},\ell_{\Lambda_{\widetilde{M}}},\ell_{\Lambda_L},\ell_{\Lambda_S},\ell_{\Lambda_G},\ell_{\Lambda} \right)$  and 4 independent anomaly matching conditions. We can use the 4 equations to solve for $\ell_{\Lambda_M},\ell_{\Lambda_{\widetilde{M}}},\ell_{\Lambda_L}$ and $\ell_{\Lambda}$. Notice that the system is linear in these 4 unknowns. Obviously we expect that the solution is not unique and we can parametrize the set of all the solutions using combinations of the remaining three unknowns: $X,\ell_{\Lambda_S},\ell_{\Lambda_G}$. Before showing the result we already know that the solution of the system can depend only on $X$ and linearly on $z_D \equiv \ell_{\Lambda_G}-\ell_{\Lambda_S}$. The solution depends only on $z_D$  because $\Lambda_S$ and $\Lambda_{G}$ form vector like pairs. The sum $\ell_{\Lambda_G}+\ell_{\Lambda_S}$ is not constrained by the 't Hooft anomaly matching conditions.

Here is the solution:
\begin{align}
\ell_\Lambda &= \frac{ \left( N+X \right)\left( 3N-X \right) \left( N+ N_f+X \right)  \left( X +N- N_f \right) -6 z_D  N_f^{2} {N}^{2}} {6 {N_f}^{2} \left( 3{N}^{2}-2 \right) } \label{eq:lL}\\
\ell_{\Lambda_L} &= - \frac{ \left( N+X \right)\left( 3N-X \right) \left( N+ N_f+X \right)  \left( X +N- N_f \right) -4 z_D  N_f^{2}} {4 {N_f}^{2} \left( 3{N}^{2}-2 \right) } \label{eq:lLL}\\
\ell_{M} &=  \frac{ \left( N+X \right)\left( 3 N^2 + 3 N N_f - 2 N X - 4 + X^2 - N_f X \right)  \left( X +N + N_f \right) +4 z_D  N_f^{2}} {4 {N_f}^{2} \left( 3{N}^{2}-2 \right) }\label{eq:lLM}\\
\ell_{\widetilde{M}} &=  \frac{ \left( N+X \right)\left( 3 N^2 - 3 N N_f - 2 N X - 4 + X^2 + N_f X \right)  \left( X +N - N_f \right) +4 z_D  N_f^{2}} {4 {N_f}^{2} \left( 3{N}^{2}-2 \right) }
\end{align}
The four equations are further constraint by being nonnegative integers. 
The problem is then reduced to finding combinations of $X$ and $z_D$ fulfilling this requirement.  

In this way we have completely characterized \textit{all} the possible solutions of the 't Hooft anomaly matching in a dual theory with matter content presented in Table \ref{QCDAdual}. We notice that one can reproduce a Seiberg-like solution for $\ell_M =1$ and the remaining indices set to zero. 

As we shall show, the above equations can be rewritten in a more transparent way. The following linear combination of the first two equations \eqref{eq:lL} and \eqref{eq:lLL}  gives:
\begin{eqnarray}
z_D = -3 \ell_\Lambda -2\ell_{\Lambda_L} \ .\label{eqzd}
\end{eqnarray}
Therefore, imposing positivity of the multiplicities, we can only have solutions for which $z_D \leq 0$.  Furthermore, we can also form another linear combination independent on $z_D$:
\begin{eqnarray}
\ell_\Lambda + N^2\ell_{\Lambda_L} = \frac{ \left( N+X \right)\left( X-3N \right) \left(X+N+ N_f \right)  \left( X +N- N_f \right)} {12 {N_f}^{2} } \ . \label{twolambda}
\end{eqnarray}
It is more convenient to work with the sum and difference of $\ell_{M}$ and $\ell_{\widetilde{M}}$:
\begin{align}
\ell_M-\ell_{\widetilde{M}}&=\frac{N+X}{N_f}\label{eq:3N} = d\\
\ell_M+\ell_{\widetilde{M}}&= d^2 + 2 \ell_{\Lambda_L} ,
\end{align}
where we have defined $d$ in the first equation. $d$ must both be a positive integer implying $X \geq N_f - N$.  The positivity condition of eq. \eqref{twolambda}  further requires the inequality $X\geq 3N$ when   $X >N_f - N$.

In summary, the original set of  't Hooft anomaly matching conditions can be neatly rewritten as:
\begin{eqnarray}
\ell_M-\ell_{\widetilde{M}}&=& d \label{eq:3N} \\
\ell_M+\ell_{\widetilde{M}}&=& d^2 + 2 \ell_{\Lambda_L} \\
\ell_\Lambda &=&  \frac{(d-1) \ d \ (d+1) (d \ N_f - 4 N ) N_f}{12} -  N^2 \ell_{\Lambda_L} \label{eq:f}\\
z_D &=& -3 \ell_\Lambda -2\ell_{\Lambda_L} \ .
\end{eqnarray}
In this form, the constraint on the $\ell$'s being positive integers is always fulfilled 
and thus all solutions are given in terms of the \emph{generic parameters} $d$ and $\ell_{\Lambda_L}$. One immediate consequence is that the dual gauge group $SU(X)$ is $SU(d \ N_f - N)$ with $d$  a positive integer. As we shall show in the Appendix \ref{involutionApp} this structure of the gauge group admits an involution of the duality operation, meaning that if one dualizes again it is possible to recover the $SU(N)$ group. This, however, is only a necessary condition for the complete involution of the duality operation to hold exactly, meaning that one has still to demonstrate that, at the fixed point the dual of the dual is actually the electric theory.

The first, and most relevant solution, is obtained for $d=1$ and it mimics Seiberg's solution for the dual of super QCD. We have, in fact, that:
  \begin{equation}
d  = 1 \quad \Rightarrow \quad X = N_f - N \ , \quad \ell_{M} = 1 \ , \quad 
\ell_\Lambda = \ell_{\Lambda_L} = \ell_{\widetilde{M}} = z_D = 0 \ . 
\end{equation}
  No other solutions exist for $d=1$, i.e. $\ell_{\Lambda_L}$ vanishes. The dual gauge group is therefore $SU(N_f - N)$.

%

We now show that the minimal value of $-z_D = \ell_{\Lambda_S} - \ell_{\Lambda_G}= 2 \ell_{\Lambda_L} + 3 \ell_{\Lambda}$ different from zero is $5$. Solutions for which $-z_D = 2 \ , 3 \ , 4$ require either $\ell_\Lambda$ or $\ell_{\Lambda_L}$
to be zero, but one finds from \eqref{eq:f}, that:
\begin{align}
\text{for} &\quad \ell_\Lambda = 0 \quad \Rightarrow \quad min(\ell_{\Lambda_L}) = 3 \\
\text{for} &\quad \ell_{\Lambda_L} = 0 \quad \Rightarrow \quad min(\ell_{\Lambda}) = 5\\
\text{thus}  &\quad min(-z_D) = 5 \quad \text{with} \quad \ell_\Lambda = \ell_{\Lambda_L}=1.
\end{align}
For any $d$ greater than one for which $\ell_{\Lambda_L}$ and $\ell_\Lambda$ assume values lower than the ones provided above there is no solution to the 't Hooft anomaly matching conditions. This can be shown by first noting that by setting one of the two $\ell$s to zero in \eqref{eq:f} the minimal value of the other $\ell$ is always obtained by setting $d=2$. The second step is to set $\ell_\Lambda$ to zero and therefore obtain: 
\begin{equation}
\left( \frac{N_f}{N} - 2\right) \frac{N_f}{N} = \ell_{\Lambda_L} \ . 
 \end{equation}
 It is clear that the minimum $\ell_{\Lambda_L}$ must be three for this equation to hold. When setting instead $\ell_{\Lambda_L}=0$ we obtain: 
 \begin{equation}
 \ell_\Lambda = \left(N_f - 2 N \right)N_f  \ .
 \end{equation}
 This equation, given that the $\ell$s are positive  integers, requires $N_f > 2N$ and therefore the minimum $N_f$ must be:
 \begin{equation}
 N_f = 2 N+ 1 \ , 
 \end{equation}
 which means: 
 \begin{equation}
 \ell_\Lambda= 2N +1 \ ,
 \end{equation}
  and therefore $\ell_\Lambda > 5$.  
If both $\ell_\Lambda$ and $\ell_{\Lambda_L}$ are equal to zero one \emph{either} recovers
the $d=1$ or one can saturate the inequality $X\geq 3N$ and
find the following  solutions:
\begin{equation}
\text{for} \quad d >1 \quad \text{and} \quad \ell_\Lambda = \ell_{\Lambda_L} = z_D =0 \quad \Rightarrow \quad 
X = 3 N \ ,\quad 
d  = 4 \frac{N}{N_f} \ , \quad 
\ell_M = \frac{d^2+d}{2} \ , \quad 
\ell_{\widetilde{M}} = \frac{d^2 - d}{2} 
\end{equation}
These solutions are, however, unnatural since $X$ is $N_f$ independent and furthermore $d$ is by definition restricted to be an integer. A solution of this type implies that in the conformal window it is impossible to change the number of flavors in the electric theory and remain with an integer $d$ for the same number of colors assuming one had started with a number of flavors for which $d$ was an integer. Therefore, we do not consider these solutions to represent viable magnetic duals.  Solutions with $z_D \leq -5$ are certainly non-minimal strongly indicating that the $d=1$ solution is the relevant one.

Requiring the electric theory to be asymptotically free imposes further general constraints. We start with recalling the first coefficient of the beta function for the
electric theory which is:
\begin{align}
\beta_0^e = 3 N - \frac{2}{3} N_f  \ .
\end{align}
For the theory to be asymptotically free, one imposes $\beta_0^e \geq 0$,
leading to the constraint on $N_f$:
\begin{align}
N_f \leq \frac{9}{2} N \ .
\end{align}
{}For $\ell_\Lambda \geq 0$  from eq. \eqref{eq:f} we find the condition:
\[
\frac{(d-1) \ d \ (d+1) (d \ N_f - 4 N ) N_f}{12} \geq  N^2 \ell_{\Lambda_L} \ ,
\]
leading to the constraint, already noted earlier: 
\[
\text{for } d\neq 1 \ , \quad d\geq 4 \frac{N}{N_f} \ .
\]

We can set another constraint on $\alpha = {N_f}/{N}$ rewriting the inequality above as:
\[
\alpha^2 - \frac{4}{d}\alpha - \frac{12 \ell_{\Lambda_L}}{(d^2-1)d^2} \geq 0,
\]
which leads to
\[
\alpha = \frac{N_f}{N} \geq 2 \frac{d^2 -1 + \sqrt{(d^2-1)(d^2 -1 + 3 \ell_{\Lambda_L})}}{(d^2-1)d} \ .
\]
Combining the two constraints on $N_f/N$ we obtain: 
\begin{equation}
2 \frac{d^2 -1 + \sqrt{(d^2-1)(d^2 -1 + 3 \ell_{\Lambda_L})}}{(d^2-1)d} \leq \frac{N_f}{N} \leq \frac{9}{2}\quad
\text{with} \quad d \geq 4 \frac{N}{N_f}  \ .
\end{equation}

Expressing instead the bound on $\ell_{\Lambda_L}$, we see that it is bounded from above,
hence constraining the number of possible solutions to be finite for a fixed value of $d$:
\[
\ell_{\Lambda_L} \leq \frac{1}{12} d \ (d^2-1)(d \frac{N_f}{N}-4) \frac{N_f}{N}.
\]

The most natural solution is the one with the lowest value assumed by all indices which corresponds to $d=1$ and therefore has associated gauge group $SU(N_f - N)$.  We will, therefore, concentrate on this theory below and summarize here the fermionic spectrum which resembles the supersymmetric version of the theory. 

\begin{table}[h]
\[ \begin{array}{|c|c|c c c c|} \hline
{\rm Fields} &\left[ SU(N_f-N) \right] & SU_L(N_f) &SU_R(N_f) & U_V(1)& U_{AF}(1) \\ \hline 
\hline 
\lambda_m & {\rm Adj} & 1 & 1 & 0 & 1 \\
 q &\Yfund &\overline{\Yfund }&1&~~\frac{N}{N_f - N} & - \frac{N_f - N }{N_f}  \\
\widetilde{q}& \overline{\Yfund}&1 &  {\Yfund}& -\frac{N}{N_f - N}& - \frac{N_f - N}{N_f}     \\
 M  & 1 & \Yfund & \overline{\Yfund} & 0 & 1-2\frac{N}{N_f} \\
 \hline \end{array} 
\]
\caption{Massless spectrum of  the {\it magnetic} fermions and their  transformation properties under the $SU(d N_f - N)$ dual gauge  (with $d=1$) and the global symmetry group.}\label{QCDAdualm}
\end{table}

\section{Decoupling of flavors}
Following Seiberg a consistent dual description requires that decoupling of a flavor in the electric theory corresponds to decoupling of a flavor in the magnetic theory. The resulting magnetic theory is still the dual of the electric theory with one less flavor. 
For nonsupersymmetric theories one can still expect a similar phenomenon to occur, if no phase transition takes place as we increase the mass of the specific flavor we wish to decouple. In fact, for nonsupersymmetric theories, this idea is similar to require the {\it mass persistent} condition used by Preskill and Weinberg \cite{Preskill:1981sr} according to which one can still use the 't Hooft anomaly matching conditions for theories with one extra massless flavor to constrain the solutions with one less flavor if, when giving mass to the extra flavor, no phase transition occurs. We now provide a time-honored example of the use of the {\it mass persistent} condition.  For example if one starts with three flavors QCD the 't Hooft anomaly conditions cannot be satisfied by massless baryonic states and therefore chiral symmetry must break. However for two flavors one can find a solution and therefore one cannot decide if chiral symmetry breaks unless the two-flavor case is embedded in the three flavor case. Using the three flavors anomaly conditions to infer that chiral symmetry must break for the two flavors case requires that as we take the {\it strange} quark mass large compared to the intrinsic scale of the theory no phase transition occurs apart from the explicit breaking of the flavor symmetry.  A counter example is QCD at nonzero baryonic matter density for which such a phase transition is expected to occur and therefore two-flavors QCD  can be realized without the breaking of the flavor symmetry \cite{Sannino:2000kg}.  A more detailed analysis of the validity of the 't Hooft anomaly conditions at nonzero matter density appeared in \cite{Hsu:2000by}.

To investigate the decoupling of each flavor one needs to introduce bosonic degrees of freedom. These are not constrained by anomaly matching conditions but are kept massless by the requirement that the magnetic and electric theories must display large distance conformality. Interactions among the mesonic degrees of freedom and the fermions in the dual theory cannot be neglected in the regime when the dynamics is strong.   For this, 
we need to add Yukawa terms in the dual Lagrangian. Here we investigate the case $d=1$ when decoupling 
a flavor in the electric theory.  The diagram below shows how the non-abelian global and gauge symmetries of the electric and magnetic theories   change upon decoupling of a Dirac flavor which is indicated by a down arrow for both theories:
\begin{align}
\begin{array}{ccc}
{Electric} & dualizing & Magnetic \\[2mm]
[SU(N)] \times SU_L(N_f) \times SU_R (N_f) &\longrightarrow & [SU(N_f - N)] \times SU_L(N_f) \times SU_R(N_f)  \\[2mm]
\Downarrow & & \Downarrow \\[2mm]
[SU(N)]  \times SU_L(N_f-1)\times  SU_R(N_f-1)  & \longrightarrow & [SU(N_f-1-N)]  \times SU_L(N_f-1)\times  SU_R(N_f-1)  \nonumber 
\end{array}
\end{align}
The abelian symmetries $U_V(1) \times U_{AF}(1)$ remain intact. It is clear from the diagram above that to ensures duality the decoupling of a flavor in the magnetic theory must also entail a breaking of the dual gauge symmetry.
The dual  theory  is vector-like and therefore the Vafa-Witten theorem \cite{Vafa:1983tf} forbids the spontaneous breaking of the magnetic gauge group. We
are, then, forced to introduce colored scalar fields that break the symmetry through a Higgs-mechanism.

\emph{How do we introduce the correct scalar spectrum and associated Yukawa terms for the magnetic dual?}

We start by identifying the part of the magnetic spectrum which must acquire a mass term when adding an explicit mass term in the electric theory for the $N_f$-th flavor. 
We therefore consider the following group decomposition:
\[\small
\begin{array}{| c | c | c l |} \hline
 & \{ SU(X) \mid SU_L(N_f) , SU_R(N_f) \} & \{ SU(X-1) \mid SU_L(N_f-1) , SU_R(N_f-1) \} &  \\ \hline
 \hline
\lambda_m & \{ \text{Adj} \mid 1 , 1\} & \{ \text{Adj} \mid  1, 1\}\oplus \{ \Yfund  \mid  1, 1\} \oplus \{ \overline{\Yfund} \mid  1, 1\} \oplus \{  1 \mid  1, 1\} & = \lambda_m \oplus \lambda'_{m,1} \oplus \lambda'_{m,2} \oplus \lambda_m^S\\[2mm]
 q & \{\Yfund \mid \overline{\Yfund}, 1 \} & \{\Yfund \mid \overline{\Yfund}, 1 \} \oplus \{\Yfund \mid 1, 1 \} \oplus \{1 \mid \overline{\Yfund}, 1 \} \oplus \{1 \mid 1, 1 \} & = q \oplus q'_1 \oplus q'_2 \oplus q^S \\[2mm]
  \widetilde{q} & \{\overline{\Yfund } \mid 1, \Yfund \} & \{\overline{\Yfund} \mid 1, \Yfund \} \oplus \{\overline{\Yfund} \mid 1, 1 \} \oplus \{1 \mid 1, \Yfund \} \oplus \{1 \mid 1, 1 \} & = \widetilde{q} \oplus \widetilde{q}'_1 \oplus \widetilde{q}'_2 \oplus \widetilde{q}^S\\[2mm]
 M & \{ 1 \mid \Yfund, \overline{\Yfund} \} & \{ 1 \mid \Yfund, \overline{\Yfund} \} \oplus \{1 \mid \Yfund,  1 \} \oplus \{1 \mid 1, \overline{\Yfund} \} \oplus \{1 \mid 1, 1 \} & = M \oplus M'_1 \oplus M'_2 \oplus M^S\\[1mm]
 \hline
\end{array}
\]
which clearly indicates that the flavor decoupling in the electric theory must lead to the generation of mass terms for the following states 
\begin{equation}
\lambda'_{m,1}, ~~\lambda'_{m,2},~~ q'_1, ~~q'_2,~~ \widetilde{q}'_1, ~~\widetilde{q}'_2, ~~M'_1,~~ M'_2,~~
\lambda_m^S, ~~q^S,~~ \widetilde{q}^S, ~~M^S\ ,
\end{equation} 
for the duality to be consistent.
We will introduce mass terms through a Higg-mechanism  via scalar fields acquiring the appropriate vacuum expectation values (vev) in order to induce the correct mass terms. To identify the scalar spectrum  we consider Yukawa interactions such as
 $\phi M q$, with $\phi$ a new scalar field which should generate a mass term, upon $\phi$ condensation, of the type $q'_2  M'_1$ and  $q^S M^S$. Note that the fermionic singlet states also receive mass terms via the same couplings.
To provide a complete decoupling of all the non-singlet states the following unique set of  quantum numbers for the complex scalars is needed: 
\begin{align}
\lambda_m \widetilde{q} &= \{\overline{\Yfund} \mid 1, \Yfund, -y, \frac{N}{N_f} \} \quad \longrightarrow \phi = \{\Yfund \mid 1, \overline{\Yfund}, y, -\frac{N}{N_f} \} \\
\lambda_m q &=  \{\Yfund \mid \overline{\Yfund}, 1, y, \frac{N}{N_f} \} \quad ~~~\longrightarrow \widetilde{\phi} = \{\overline{\Yfund} \mid \Yfund, 1, -y , -\frac{N}{N_f} \} \\
M q & = \{\Yfund \mid 1, \overline{\Yfund}, y, -\frac{N}{N_f} \} \quad \longrightarrow \phi_3 = \{ \overline{\Yfund} \mid 1, \Yfund, - y, \frac{N}{N_f} \} \sim \phi^*\\
M \widetilde{q} & = \{\overline{\Yfund} \mid \Yfund, 1, -y, -\frac{N}{N_f} \} \quad \longrightarrow \phi_4 = \{ \Yfund \mid \overline{ \Yfund}, 1 , y, \frac{N}{N_f} \} \sim \widetilde{\phi}^*
\end{align}
Where on the left we showed the fermionic bilinear we wish to generate and on the right the needed corresponding complex scalar. 

The similarity of the two last scalar fields with the conjugate of the first two fields is
at the level of quantum numbers only. As a minimalistic approach, however, we choose them to be pairwise equivalent, and thus reduce the number of degrees of freedom. The Yukawa terms giving mass to the non-singlet states are thus:
\[
\mathcal{L}_Y = \widetilde{y}_\lambda \phi \lambda_m \widetilde{q} + y_\lambda \widetilde{\phi} \lambda_m q + y_M \phi^* M q + \widetilde{y}_M \widetilde{\phi}^* M \widetilde{q} + {\rm h.c.} \ .
\] 

These terms automatically give masses to the singlet states as well.  However, each singlet receives contribution from two terms. We now 
must ensure that there will be no combination of singlet states with
vanishing mass.

Denote the vev's of the $\phi$'s with:
\begin{equation}
\langle \phi \rangle = \phi_{X}^{\overline{N_f}} = v\\ \ , \qquad 
\langle \widetilde{\phi} \rangle = \widetilde{\phi}^{X}_{N_f} = \widetilde{v} \ ,
\end{equation}
with $X$  and $N_f$ the  particular dual color and flavor indices along which we align the scalar condensates. 

When the scalar fields assume these vev's, the Yukawas reduce to:
\begin{align}
\mathcal{L}_Y =& \wt{y}_\lambda v \left( \lambda'_{m,1}\wt{q}'_1 + \lambda_m^S \wt{q}^S \right)
+y_\lambda \wt{v}  \left( \lambda'_{m,2}q_1 + \lambda_m^S q^S \right) +
y_M v^* \left( M'_1q_2 + M^S q^S \right) + \wt{y}_M \wt{v}^* \left( M'_2 \wt{q}'_2 + M^S \wt{q}^S \right) + {\rm h.c.} \ .
\end{align}
Thus, the singlet mass-matrix is
\begin{align}
\mathcal{M}_S = 
\begin{array}{c} \lambda_m^S \\[1mm] q^S \\[1mm] \wt{q}^S \\[1mm] M^S \end{array}
\begin{pmatrix}
0 &  y_\lambda \wt{v} & \wt{y}_\lambda v & 0  \\[1mm]
 y_\lambda \wt{v}  & 0 & 0 & y_M v^* \\[1mm]
\wt{y}_\lambda v & 0 & 0 & \wt{y}_M \wt{v}^* \\[1mm]
0 &   y_M v^*  &  \wt{y}_M \wt{v}^* & 0 
\end{pmatrix},
\end{align}
with
\[
\det \mathcal{M}_S = \left(y_\lambda \wt{y}_M \mid \wt{v} \mid^2 - \wt{y}_\lambda y_M \mid v \mid^2 \right )^2.
\]
All symmetries are correctly broken if and only if
\[
y_\lambda \wt{y}_M \mid \wt{v} \mid^2 \neq \wt{y}_\lambda y_M \mid v \mid^2 \ ,
\]
and furthermore all the unwanted singlets correctly decouple.

We are still missing an essential ingredient, i.e. the possibility to communicate to the magnetic theory   the introduction of a mass term in the electric theory for some of the flavors.   The scalar vevs are expected to be induced by the electric quark masses. The solution to this problem  is suggested by inspecting the electric mass term which reads: 
\begin{equation}
{\rm Tr}\left[m Q\widetilde{Q}\right] + {\rm h.c.} \equiv {\rm Tr}\left[m {\Phi_{Q\widetilde{Q} }} \right] + {\rm h.c.} \ ,
\end{equation}
with $m$ the explicit mass matrix. If we wish to decouple a single flavor then the mass matrix  has the only entry $m \delta_{N_f}^{\bar{N}^f}$ with zero for the remaining ones. We have also introduced the complex scalar $\Phi_{Q\widetilde{Q} }$  with the following quantum numbers with respect to the electric gauge group: 
\begin{equation}
\Phi_{Q\widetilde{Q} }= \{1 \mid \Yfund, \overline{\Yfund}, 0, -2\frac{N}{N_f} \} \sim  Q\widetilde{Q}   \ .
\end{equation} 
This state resembles the supersymmetric auxiliary complex scalar field of $M$ and should be identified with the standard QCD meson. Adding $\Phi_{Q\widetilde{Q} }$ as part of the dual spectrum allows us to introduce the following interaction in the dual theory: 
\begin{equation}
{\cal L}_{\Phi_{Q\widetilde{Q} }} =  \frac{y_{\Phi_{Q\widetilde{Q} }}}{\mu}\, \widetilde{\phi}^* \Phi_{Q\widetilde{Q} }{\phi}^*  + {\rm h.c.} \ . 
\label{inducedmass}
\end{equation}
This operator is the nonsupersmmetric Lagrangian equivalent of Seiberg's superpotential term. The ultraviolet scale $\mu$ serves to adjust the physical dimensions since $\phi$ and $\widetilde{\phi}$ are canonically normalized elementary magnetic fields while $\Phi_{Q\widetilde{Q} }$ has dimension three being an electric  field made by $Q\widetilde{Q}$. It was shown in \cite{Sannino:2008pz,Sannino:2008nv} that due to the presence of a nonzero mass term $\Phi_{Q\widetilde{Q} }$ acquires  a vev in a theory with large distance conformality with the following explicit dependence on the mass and anomalous dimension $\gamma = - \partial \ln m / \partial \ln \Lambda$:  
\begin{eqnarray}
\langle {\Phi_{Q\widetilde{Q} }}^{\bar{N}_f}_{N_f}  \rangle  &\propto& -m \mu^2 \ ,  \qquad ~~~~~~~~~~~~~~~~~~~~~0 <\gamma < 1 \ , \label{BZm} \\
\langle {\Phi_{Q\widetilde{Q} }}^{\bar{N}_f}_{N_f} \rangle   &\propto &   -m \mu^2  \log \frac{\mu^2}{{|\langle {\Phi_{Q\widetilde{Q} }}  \rangle|}^{2/3} }\ , ~~~~~~   \gamma \rightarrow  1    \ , \label{SDm} \\
\langle {\Phi_{Q\widetilde{Q} }}^{\bar{N}_f}_{N_f} \rangle   &\propto &  -m^{\frac{3-\gamma} {1+\gamma}} 
 \mu^{\frac{4\gamma} {1+\gamma}}\ , ~~~~~~~~~~~~~~~~~~~~~~1<\gamma  \leq 2 \ .
   \label{UBm}
 \end{eqnarray}
An interesting reanalysis of the ultraviolet versus infrared dominated components of the vevs above which, however, does not modify these results can be found in \cite{DelDebbio:2010ze}. This shows that for any physically acceptable value of the anomalous dimension the relevant scalar degrees of freedom of the magnetic theory also acquire a mass term, decouple and are expected to develop a vev able to induce masses for the remaining states. In the magnetic theory the dual of $\Phi_{Q\widetilde{Q} }$ is denoted by $\Phi_{m}$. The operator \eqref{inducedmass}, involving $\phi$, $\widetilde{\phi}$ and $\Phi_m$, should be at most  a marginal one at the infrared fixed point for the theory to display large distance conformality. Therefore the physical mass dimension of $\Phi_{m}$  is two and  $\Phi_{Q\widetilde{Q} } = \mu \Phi_{m}$. This implies the fundamental result that the anomalous dimension at the fixed point of the electric quark bilinear cannot exceed one!   

Since the anomalous dimension does not exceed one at the interacting fixed point,   $\Phi_m$ cannot be interpreted as an elementary field here. This is completely analogous to Seiberg's case. This is so since  $\Phi_m$ precisely maps into the auxiliary field of the mesonic chiral superfield in superQCD and therefore cannot propagate. To be able to integrate the $\Phi_m$ field out at the interacting fixed point one needs to add an associated quadratic term leading to the Lagrangian: 
\begin{equation}
 {\cal L}_{\Phi_{Q\widetilde{Q} }} + \frac{\Phi_{Q\widetilde{Q} }^{\ast}\Phi_{Q\widetilde{Q} }}{\mu^2} =  \left( {y_{\Phi_{Q\widetilde{Q} }}}\, \widetilde{\phi}^* \Phi_{ m}{\phi}^*  + {\rm h.c.} \right) + \Phi_m^{\ast}\Phi_m\ . 
\label{njl}
\end{equation}
  Integrating out the auxiliary field $\Phi_m$ leads to the following quartic Lagrangian for the $\phi$ fields: 
  \begin{equation}
  {\cal L}_{\phi^4} =
   - y_{\Phi_{Q\widetilde{Q}}}^2 \, 
   {\widetilde{\phi}^{*r}}_{c1}  \, 
   {\widetilde{\phi}}^{c2}_r  \, 
  \phi_{c2}^{l}  \,
  {\phi^{* c1}}_l \ ,
  \end{equation}
    with $c_i$ the dual color indices and $r$ and $l$ the right and left flavor indices respectively. 
  
Adding an explicit quark mass in the electric theory corresponds to adding the  following  operator:
\begin{equation}
{\rm Tr} \left[ m \Phi_{Q\widetilde{Q}}\right] + {\rm h.c.} = \mu{\rm Tr} \left[ m \Phi_{m}\right] + {\rm h.c.}
 \end{equation}
 in the  magnetic theory which induces a vauum expectation value for the scalar field $\Phi_{Q\widetilde{Q}} = - m \mu^2$ in perfect agreement with the field theoretical result shown in \eqref{BZm}. 

We note that the second term in \eqref{njl}, in the electric variables, is  the Nambu Jona-Lasinio \cite{Nambu:1961tp,Nambu:1961fr} (NJL) four-fermion operator. In the dual variables it means that one can view the ordinary fermionic condensate, at the fixed point, as a composite state of two elementary magnetic scalars. The duality picture offers the first simple explanation of why the anomalous dimension of the fermion condensate does not exceed one at the boundary of the conformal window. 

Finally we comment on the fact that requiring the electric and the magnetic theory to be both asymptotically free  one deduces the following range of possible values of $N_f$: 
\begin{equation}
\frac{3}{2}N \leq  N_f \leq \frac{9}{2} N \ . 
\end{equation}
It is natural to identify this range of values of $N_f$ with the actual extension of the conformal window 

 \section{Conclusion}

We constructed possible magnetic duals of QCD with one adjoint Weyl fermion by classifying 
 {\it all} the solutions of the 't Hooft anomaly matching conditions of the type shown in the table \ref{QCDAdual}. 
 We assumed the number of flavors to be sufficiently large for the electric and magnetic theory to develop an infrared fixed point.  
 The 't Hooft anomaly conditions  constrained the fermionic spectrum and led to a dual gauge group of the type $SU(dN_f - N)$ with $d$ a positive integer. We have shown that the case of $d=1$ leads to the minimal amount of fermionic matter needed to saturate the anomalies and moreover any other choice of $d$ does not allow to move in the flavor space without simultaneously change the number of colors of the electric theory. These results strongly suggest that $SU(N_f - N)$ constitute the obvious candidate for the dual gauge group. 
 
 Imposing consistent flavor decoupling allowed to determine the spectrum of the scalars and the Yukawa sector of the dual theory.
 
 An important result is that we provided a consistent picture for the existence of the first  nonsupesymmetric dual valid for {\it any} number of colors. 
 
 We have also shown that the anomalous dimension of the electric fermion mass operator can never exceed one at the infrared fixed point for the dual theory to be consistent. This is a remarkable result showing that one can obtain important non-perturbative bounds on the anomalous dimension of vector-like nonsupersymmetric gauge theories using duality arguments.
 
 Our dual theory can be already tested with todays first principle lattice techniques. In fact,
given that our theory resembles super QCD but without the fundamental scalars, in the electric theory, establishing the existence of duality in our model is the first step towards checking Seiberg's duality on the lattice.

\appendix
\section{The  Involution Theorem}
\label{involutionApp}
What happens if we dualize again the magnetic theory? The simplest possibility is that one recovers the electric theory. In fact one can imagine  more general situations but for the time being we will assume this to be the case.  The duality transformation is therefore a mathematical {\it involution}. This condition leads to an interesting and general theorem on the gauge structure of any dual gauge group. 
\begin{Def}
Consider an electric $SU(N)$ gauge theory with $N_f$ flavors possessing a magnetic dual constituted of an $SU(X)$ gauge theory with also $N_f$ flavors. The same number of flavors insures that the global symmetries, encoding the physically relevant information of the theory, match in both theories.

The duality transformation from the electric  to the magnetic theory is an {\it involution} if a second duality transformation acting on the magnetic theory gives back the original electric $SU(N)$ theory.
\end{Def}

Denoting the duality transformation with an arrow, we summarize the involution as follows:
\begin{align*}
SU(N) \longrightarrow SU(X) \longrightarrow SU(N)
\end{align*}

We are now ready to enunciate the following theorem: 

\begin{thm}
In order for the $SU(X)$ gauge group to respect {\it involution} we must have:
\begin{equation}  
X = P(N_f) - N \ , 
\end{equation}
where $P(N_f)$ denotes any integer valued polynomial in $N_f$.
\end{thm}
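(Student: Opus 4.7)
The plan is to treat the duality map as a function $X = f(N, N_f)$ on the electric rank at fixed number of flavors, and to encode involution as the functional equation $f(f(N, N_f), N_f) = N$. I will restrict attention to maps where $f$ is polynomial in $N$ with coefficients that are integer-valued functions of $N_f$, an ansatz strongly suggested by the solutions obtained in Section~III (which all read $X = d N_f - N$) and by the fact that ranks of dual gauge groups arise algebraically from the combinatorics of anomalies and index sums, leaving no room for transcendental dependence on $N$.

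First I would show that the degree of $f$ in $N$ must be exactly one. Writing $f(N, N_f) = a_d(N_f)\, N^d + (\text{lower order in } N)$ with $d \geq 1$ and $a_d \not\equiv 0$, a direct computation shows that $f(f(N, N_f), N_f)$ has leading term $a_d^{\,d+1}\, N^{d^2}$ in $N$. Setting $f \circ f = \mathrm{id}$ forces $d^2 = 1$, hence $d = 1$.

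With the linear ansatz $f(N, N_f) = a(N_f)\, N + b(N_f)$, the involution condition expands to $a^2 N + a b + b = N$, which yields $a^2 = 1$ and $b\,(a+1) = 0$. The branch $a = +1$, $b = 0$ is the trivial self-dual case in which the magnetic theory coincides with the electric one; it is ruled out by the existence of any nontrivial solution, for instance the $d = 1$ dual $X = N_f - N$ already exhibited. One is therefore forced onto $a = -1$ with $b = P(N_f)$ unconstrained, giving $X = P(N_f) - N$. The requirement that $X$ be a positive integer for every admissible $N$ then translates directly into $P$ being an integer-valued polynomial in $N_f$; the explicit $SU(d N_f - N)$ families of Section~III correspond to $P(N_f) = d N_f$.

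The main obstacle is justifying the polynomial ansatz on $f$: without it, involutions of $\mathbb{Z}_{>0}$ can be entirely arbitrary and the theorem has no content. I would justify it by noting that any duality of Seiberg type depends on $N$ only through rational/polynomial combinations dictated by 't~Hooft matching, and that every explicit realization known, supersymmetric or not, conforms to this structure. A secondary, mild obstacle is verifying that $P$ takes positive integer values throughout the physical window in $N_f$; this is an arithmetic side-condition that is checked case-by-case rather than a structural one, and plays no role in the derivation of the functional form itself.
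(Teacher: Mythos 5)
Your proposal is correct and follows essentially the same route as the paper's proof: force linearity in $N$ by noting that composing the duality map squares the degree in $N$ (your $a_d^{\,d+1}N^{d^2}$ computation is just a more explicit version of the paper's $X\sim N^{p}\Rightarrow N^{p^{2}}$ argument), then in the linear case $X=P(N_f)-\beta N$ the involution condition forces the coefficient of $N$ to be $-1$, with the trivial self-dual branch ($a=+1$, $b=0$, i.e.\ the paper's $\beta=-1$, $P=0$) discarded and $P(N_f)$ left as an arbitrary integer-valued polynomial. The ansatz you flag as the main obstacle is likewise assumed, not derived, in the paper, so no substantive gap separates the two arguments.
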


\begin{proof}
We start by noting that the involution property requires $X$ to be linear in $N$.  This is so since, if the duality transformation requires the gauge group $X$ to depend on a generic power $p$ of $N$, i.e. $X \sim N^p$, then a second duality transformation on the magnetic gauge group leads to a second gauge group depending on $(N^{p})^p$. The involution condition requires  $p$ to be one. 

Furthermore, $X$ can still depend on any continuous and integer valued function of $N_f$ via the polynomial $P(N_f) = \sum_{p=0}^{n} \alpha_p N_f^p$.

Combining the two requirements we have that $X = P(N_f) - \beta N$, for some integer $\beta$. Imposing {\it involution} one last time we require:
\begin{align*}
N  = P(N_f) - \beta X  \rightarrow P(N_f) - \beta \left (P(N_f) - \beta N \right ) = \left (1-\beta \right ) P(N_f) + \beta^2 N.
\end{align*}
Only $\beta = 1$ solves this equation non-trivially (i.e. $\beta=-1$ and $P(N_f) = 0$ is the trivial solution),
thus proofing the theorem.
\end{proof}

For the present work, the theorem guarantees that \emph{any} solution
of the 't Hoof anomaly matching conditions above respects the involution condition since $X = d \ N_f - N$.

Note that the {\it involution theorem} corresponds to the minimum requirement for the duality transformation to be an {\it involution}, even before taking into account the specific spectrum of the dual theory.

\end{document}